\DeclareSymbolFont{SY}{U}{psy}{m}{n}
\DeclareMathSymbol{\emptyset}{\mathord}{SY}{'306}
\newtheorem{theorem}{Theorem}
\newtheorem{corollary}[theorem]{Corollary}
\newtheorem{lemma}[theorem]{Lemma}
\theoremstyle{definition}
\theoremstyle{remark}
\newtheorem{remark}[theorem]{Remark}
\newtheorem{example}[theorem]{Example}
\begin{document}

\medskip

\begin{center}
{\Large\textsf{ QUANTUM SPEED LIMITS \\[1.5mm]
FOR TIME EVOLUTION OF A SYSTEM SUBSPACE}}
\bigskip

{\large

\textit{Sergio Albeverio}$^\textrm{\,1,*}$, \textit{Alexander K.
Motovilov}$^\textrm{\,2,3,**}$

}

\bigskip

{\small $^\textrm{\,1}$\textsf{Institut f\"ur Angewandte Mathematik
and
HCM, Universit\"at Bonn,\\[-0mm]
Endenicher Allee 60, 53115 Bonn, Germany}
\smallskip

$^\textrm{\,2}$\textsf{Bogoliubov Laboratory of Theoretical Physics,
JINR, Joliot-Curie 6, 141980 Dubna, Russia}
\smallskip

$^\textrm{\,3}$\textsf{Dubna State University, Universitetskaya 19,
141980 Dubna, Russia}

}
\end{center}
\bigskip

\medskip

{\small

By a quantum speed limit one usually understands an estimate on how
fast a quantum system can evolve between two distinguishable states.
The most known quantum speed limit is given in the form of the
celebrated Mandelstam-Tamm inequality that bounds the speed of the
evolution of a state in terms of its energy dispersion. In contrast
to the basic Mandelstam-Tamm inequality, we are concerned not with a
single state but with a (possibly infinite-dimensional) subspace
which is subject to the Schr\"odinger evolution. By using the
concept of maximal angle between subspaces we derive optimal bounds
on the speed of such a subspace evolution. These bounds may be
viewed as further generalizations of the Mandelstam-Tamm inequality.
Our study includes the case of unbounded Hamiltonians.

}

\bigskip

\bigskip

\medskip

\noindent{\bf Keywords:} Mandelstam-Tamm inequality; Fleming bound;
quantum speed limit; subspace evolution

\medskip

\noindent{{\bf MSC} (2000):} {\rm Primary 47D06; Secondary 34G10}

\bigskip

\vfill

\noindent\rule{35mm}{.2mm}

$^*$E-mail: albeverio@uni-bonn.de

$^{**}$E-mail: motovilv@theor.jinr.ru

\thispagestyle{empty}

\newpage

\section{Introduction}
\label{Intro}

By a \textit{quantum speed limit} one understands a lower bound on
the time required for a quantum system to evolve between two (given)
distinguishable states.  Literature on quantum speed limits is
enormous and in this short article we make no attempt to present a
more or less complete survey of all the results on the subject.
Instead, we simply refer the reader to the recent comprehensive
review articles \cite{DeCa} and \cite{Frey}.

To introduce the main known quantum speed limits, we consider an
isolated quantum system with a Hamiltonian $H$, which is supposed to
be a time-independent self-adjoint operator on the complex Hilbert
space $\mathfrak{H}$. The unit (i.e. norm one) vectors of
$\mathfrak{H}$ represent possible pure states of the system. To be
more precise, a pure state $\mathcal S$ of a quantum system is a
class of equivalence of vectors on a unit sphere in $\mathfrak{H}$:
the unit vectors $\phi,\psi\in\mathfrak{H}$ represent the same state if
there is $\alpha\in[0,2\pi)$ such that
$\psi=\mathrm{e}^{\mathrm{i}\alpha}\phi$. In an obvious way, one may
identify the state $\mathcal S$ with a one-dimensional subspace
$\mathfrak{P}_{\mathcal S}$ which is a linear span of an arbitrarily
chosen vector $\psi$ in $\mathcal S$,\,\, $\mathfrak{P}_{\mathcal
S}:=\bigl\{\phi=\lambda \psi\,\,\, \big|
\lambda\in\mathbb{C}\,\bigr\}$.

In the following, we will assume, for shortness, that the
measurement units are chosen in such a way that $\hbar=1$. It is
supposed that the evolution of a state vector
$\psi(t)\in\mathfrak{H}$, $t\in\mathbb{R}$, is governed by the
Schr\"odinger equation
\begin{align}
\label{Sch1}
\mathrm{i} \frac{d}{dt}\psi & =H\psi,\\
\label{Sch2} \psi(t)\bigg|_{t=t_0}&=\psi_0,
\end{align}
where the vector $\psi_0$ is taken in the domain $\mathrm{Dom}(H)$
of $H$. This vector represents an initial state of the system.

Let $t_0=0$. Then the solution to \eqref{Sch1}, \eqref{Sch2} is
given by
\begin{equation}
\label{U} \psi(t)=U(t)\psi_0,
\end{equation}
where
\begin{equation}
 U(t)=\mathrm{e}^{-\mathrm{i} Ht}, \quad t\in\mathbb{R},
\end{equation}
form a strongly continuous unitary group. Studies of quantum speed
limits originate from the very basic question: \textit{How fast can
the isolated system with the Hamiltonian $H$ evolve to a state
orthogonal to its initial state $\psi_0$? }

The importance of this question is obvious in many respects.
Probably, the most recent motivation comes from quantum information
theory and quantum computing (see, e.g., \mbox{\cite{DeCa,Frey}}).

Known answers to the above basic question have been given in the
form of lower bounds for the so-called orthogonalization time
$T_\perp$, that is, for the time needed for the system to evolve
from $\psi_0$ to a state $\psi(T_\perp)$ such that ${\langle}
\psi_0,\psi(T_\perp){\rangle}=0$.

The first of these bounds is the celebrated \textit{Mandelstam--Tamm
inequality} of 1945 discovered in \cite{MaT}:
\begin{equation}
\label{MT} T_\perp\geq \frac{\pi}{2\, \Delta E},
\end{equation}
where $\Delta E$ is the energy spread (dispersion) in the initial
state $\psi_0$,
\begin{equation}
\Delta E=\sqrt{\| H\psi_0\|^2 -{\langle} H\psi_0,\psi_0{\rangle}^2},
\quad \psi_0\in\mathrm{Dom}(H).
\end{equation}

Another lower bound for the orthogonalization time, the
\textit{Margolus--Levitin inequality} \cite{MaLe} has been found
more than half a century later, in 1998. This bound has the
following form:
\begin{equation}
\label{ML} T_\perp\geq \frac{\pi}{2\, \delta E},
\end{equation}
where
\begin{equation}
\delta E= {\langle}
H\psi_0,\psi_0{\rangle}-\min\bigl(\mathop{\mathrm{spec}}(H)\bigr)
\end{equation}
is nothing but the average energy for the state $\psi_0$ measured
relative to the lower edge of the spectrum
$\mathop{\mathrm{spec}}(H)$ of the Hamiltonian $H$.

Both the bounds \eqref{MT} and \eqref{ML} have been proven to be
sharp (see, e.g., \cite[p. 7]{DeCa} and \cite[p. 3923]{Frey}).

It is worth to remark that the inequalities \eqref{MT} and
\eqref{ML} recall the uncertainty relation for energy and time but
are very different from this relation in the essence since both
\eqref{MT} and \eqref{ML} are related not to the standard deviation
in the measurement of $t$ but to the well-founded time for a given
state to evolve into an orthogonal state.

There is also a lower bound for intermediate time moments, namely,
the \textit{Fleming bound} (1973) derived in \cite{Flem}:
\begin{equation}
\label{Flem} T_\theta\geq \frac{\theta}{\Delta E}\,,
\end{equation}
where $T_\theta$ stands for the time moment at which the acute angle
$$\angle\bigl(\psi_0,\psi(t)\bigr):=\arccos|{\langle}\psi_0,\psi(t){\rangle}|$$
between the states represented by the vectors $\psi_0$ and $\psi(t)$
reaches a certain value $\theta\in(0,\pi/2]$.

Obviously, the Mandelstam-Tamm bound represents a particular case of
the Fleming bound for $\theta=\frac{\pi}{2}$. In fact, through the
years, the Mandelstam-Tamm inequality \eqref{MT} has been
rediscovered several times by various researchers (for the
corresponding discussion, see, e.g., \cite[p.\,5]{DeCa}). Also,
there are generalizations of this bound to the evolution of mixed
states. Furthermore, there are more detailed evolution speed
estimates for particular classes of quantum-mechanical evolutionary
problems (see \cite{DeCa,Frey}). The existence of
Mandelstam-Tamm-type bounds for the orthogonalization time is
discussed even for certain non-self-adjoint (so-called
pseudo-Hermitian and, in particular, $PT$-symmetric) Hamiltonians
(see \cite{BBr,China} and references therein).

In the present work, we are concerned with a possibly
infinite-dimensional subspace of the system states that evolve
accordingly to the non-stationary Schr\"odinger equation. Notice
that by a subspace of the Hilbert space $\mathfrak{H}$ we always
understand a closed linear subset in $\mathfrak{H}$. By using the
concept of maximal angle between subspaces we then derive several
estimates on the speed of such a subspace evolution. The estimate
\eqref{TtPb1} we attain in Theorem~\ref{TTtheta} below  may be
viewed as a natural extension of the Fleming bound \eqref{Flem}.

\section{Bounds for the speed of the subspace evolution}

We are concerned not with a single state but with a whole (possibly
infinite-dimensional) subspace spanned by the system states that are
subject to the Schr\"odinger evolution. That is, we consider a
subspace $\mathfrak{P}_0\subset\mathfrak{H}$ every element $\psi_0$
of which is subject to the Schr{\"o}dinger evolution \eqref{Sch1},
\eqref{Sch2}, i.e.,
\begin{align}
\label{Sch1p}
\mathrm{i} \frac{d}{dt}\psi & =H\psi,\\
\label{Sch2p} \psi(t)\bigg|_{t=0}&=\psi_0, \quad \psi_0 \in
\mathfrak{P}_0.
\end{align}
For simplicity (in fact, mainly for avoiding discussion of the
domains), the Hamiltonian $H$ is assumed to be a bounded operator.

Given $t\geq 0$, by $\mathfrak{P}(t)$ we denote the subspace of
$\mathfrak{H}$ spanned by the values $\psi(t)$ of the vector-valued
functions that solve \eqref{Sch1p}, \eqref{Sch2p} for various
$\psi_0\in\mathfrak{P}_0$. So that we deal with a path $
\mathfrak{P}(t),\,\,  t\geq 0, $ in the set of all subspaces of the
Hilbert space $\mathfrak{H}$. Or (and this is the same) with the
path
\begin{equation}
\label{Ppath} P(t), \quad t\geq 0,\qquad
\mathop{\mathrm{Ran}}\bigl(P(t)\bigr)=\mathfrak{P}(t),
\end{equation}
of the orthogonal projections $P(t)$ in $\mathfrak{H}$ onto the
respective subspaces $\mathfrak{P}(t)$.

It is well known, and this is easily verified by inspection, that
the projection path $P(t)$ is the (unique) solution to the Cauchy
problem
\begin{align}
\label{Sch1P}
\mathrm{i} \frac{d}{dt} P & =[P,H],\\
\label{Sch2P} P(t)\bigg|_{t=0}&=P_0,
\end{align}
where $[P,H]:=PH-HP$ \,\, denotes the commutator of $P=P(t)$ and
$H$, and  $ \mathop{\mathrm{Ran}}\big(P_0\bigr)=\mathfrak{P}_0$. The
solution to \eqref{Sch1P}, \eqref{Sch2P} is explicitly given by
\begin{equation}
\label{PT} P(t)=U(t)P_0 U(t)^*=\mathrm{e}^{-\mathrm{i}
Ht}P_0\mathrm{e}^{\mathrm{i} Ht}.
\end{equation}

We further notice that the set of all orthogonal projections in the
Hilbert space $\mathfrak{H}$ (and hence the set of all subspaces of
$\mathfrak{H}$) is a metric space with distance given by the
standard operator norm,
$$
{\rho}(Q_1,Q_2):=\|Q_1-Q_2\|, \qquad
{\rho}(\mathfrak{Q}_1,\mathfrak{Q}_2):={\rho}(Q_1,Q_2),
$$
where $Q_1$, $Q_2$ are arbitrary orthogonal projections and
$\mathfrak{Q}_1$, $\mathfrak{Q}_2$, their ranges.

It is, however, much less known that there is another natural metric
on the set of all orthogonal projections in $\mathfrak{H}$ (and
hence on the set of all the subspaces of $\mathfrak{H}$). The
corresponding distance is defined by
\begin{equation}
\label{rhoM}
\vartheta(\mathfrak{Q}_1,\mathfrak{Q}_2):=\vartheta(Q_1,Q_2):=\arcsin(\|Q_1-Q_2\|).
\end{equation}
That \eqref{rhoM} is a metric has been proven in 1993 by Lawrence
Brown \cite{Brown}. An alternative proof may be found in
\cite{AM-CAOT}.

The quantity {$\vartheta(\mathfrak{Q}_1,\mathfrak{Q}_2)$} is called
the \textit{maximal angle} between the subspaces $\mathfrak{Q}_1$
and $\mathfrak{Q}_2$.
\bigskip

\begin{remark}
The concept of maximal angle between subspaces can be traced back to
Krein, Krasnoselsky, and Milman \cite{KKM1948}. Assuming that
$(\mathfrak{Q}_1,\mathfrak{Q}_2)$ is an ordered pair of subspaces
with $\mathfrak{Q}_1\neq\{0\}$, they applied the notion of the
(relative) maximal angle between $\mathfrak{Q}_1$ and
$\mathfrak{Q}_2$ to the number
$\varphi(\mathfrak{Q}_1,\mathfrak{Q}_2)\in[0,\pi/2]$ such that
\begin{equation}
\label{t12}
\sin\varphi(\mathfrak{Q}_1,\mathfrak{Q}_2)=\sup\limits_{x\in\mathfrak{Q}_1,\,\|x\|=1}\mathop{\rm
dist}(x,\mathfrak{Q}_2).
\end{equation}
If both $\mathfrak{Q}_1\neq\{0\}$ and $\mathfrak{Q}_2\neq\{0\}$ then
\begin{equation}
\label{tet}
\vartheta(\mathfrak{Q}_1,\mathfrak{Q}_2)=\max\bigl\{\varphi(\mathfrak{Q}_1,\mathfrak{Q}_2),
\varphi(\mathfrak{Q}_2,\mathfrak{Q}_1)\bigr\}.
\end{equation}
Unlike $\varphi(\mathfrak{Q}_1,\mathfrak{Q}_2)$, the maximal angle
$\vartheta(\mathfrak{Q}_1,\mathfrak{Q}_2)$ is always symmetric with
respect to the interchange of the arguments $\mathfrak{Q}_1$ and
$\mathfrak{Q}_2$. Furthermore,
$$
\varphi(\mathfrak{Q}_2,\mathfrak{Q}_1)=\varphi(\mathfrak{Q}_1,\mathfrak{Q}_2)=
\vartheta(\mathfrak{Q}_1,\mathfrak{Q}_2)\quad \text{whenever
}\|Q_1-Q_2\|<1.
$$
\end{remark}

To give a quantum-mechanical interpretation of the maximal angle
between subspaces we follow the concept of a subspace-state of a
quantum system. Namely, given a subspace
$\mathfrak{Q}\subset\mathfrak{H}$, one says that the system is in
the $\mathfrak{Q}$-state if it is in a pure state described by a
(non-specified) normalized vector $x\in\mathfrak{Q}$. Clearly, by
\eqref{t12} and \eqref{tet}
 the quantity $\cos^2\theta(\mathfrak{Q}_1,\mathfrak{Q}_2)$ may be understood as a
minimum probability for a quantum system which is in a
$\mathfrak{Q}_1$-state to be found also in a $\mathfrak{Q}_2$-state.

Now assume that $Q(t)$, $t\geq 0$, is an arbitrary piecewise smooth
path in the set of orthogonal projections on the Hilbert space
$\mathfrak{H}$. By using the triangle inequality for subspaces one
verifies (see, e.g., \cite[Theorem 1]{MS2015}) that
\begin{equation}
\label{tPP}
\vartheta\bigl(\mathfrak{Q}_0,\mathfrak{Q}_t\bigr)\leq\int\limits_0^t
\|\dot Q(\tau)\| d\tau,
\end{equation}
where $\dot Q(t)=\frac{dQ(t)}{dt}$ and $\mathfrak{Q}_t:=
\mathop{\mathrm{Ran}}\bigl(Q(t)\bigr)$, $t\geq 0$.

The following statement is the main tool that we use below in
establishing quantum speed limits for evolution of subspaces. It
represents itself the first of such limits.

\begin{theorem}
\label{Th2} Assume that $P(t)$, $t\geq 0$, is the path \eqref{PT}
where $P_0$ is an orthogonal projection in $\mathfrak{H}$. Then the
following inequality holds
\begin{equation}
\label{MBound}
\vartheta\bigl(\mathfrak{P}_0,\mathfrak{P}(t)\bigr)\leq
V_{{H,P_0}}\, t,
\end{equation}
where $\mathfrak{P}_0= \mathop{\mathrm{Ran}}(P_0)$,
$\mathfrak{P}(t)= \mathop{\mathrm{Ran}}\bigl(P(t)\bigr)$, $t\geq 0$,
and
\begin{equation}
\label{VHP} V_{{H,P_0}}:=\|P_0 H P_0^\perp\|=\|P_0^\perp H P_0\|.
\end{equation}
\end{theorem}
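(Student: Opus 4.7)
The plan is to apply the integral inequality \eqref{tPP} to the specific path $P(t)$ given by \eqref{PT}, which reduces the task to computing $\|\dot P(t)\|$ and showing it equals the time-independent quantity $V_{H,P_0}$.

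First, from the Cauchy problem \eqref{Sch1P}--\eqref{Sch2P} I have $\dot P(t) = -\mathrm{i}[P(t),H]$, so $\|\dot P(t)\| = \|[P(t),H]\|$. The next step is the algebraic identity
\begin{equation*}
[P,H] = PHP^\perp - P^\perp H P,
\end{equation*}
obtained by inserting $I = P + P^\perp$ on either side of $H$ in the commutator and canceling the $PHP$ term. Note that $(PHP^\perp)^* = P^\perp H P$ since $H=H^*$, so $[P,H] = A - A^*$ with $A := PHP^\perp$.

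The central step (and the main technical point) is to prove $\|A - A^*\| = \|A\|$. With respect to the orthogonal decomposition $\mathfrak{H} = \mathop{\mathrm{Ran}}(P)\oplus\mathop{\mathrm{Ran}}(P^\perp)$, the operator $A$ has the $2\times 2$ block form with only one nonzero entry $B:\mathop{\mathrm{Ran}}(P^\perp)\to\mathop{\mathrm{Ran}}(P)$ in the upper-right corner. A direct block computation then gives
\begin{equation*}
(A-A^*)^*(A-A^*)=\begin{pmatrix} BB^* & 0 \\ 0 & B^*B \end{pmatrix},
\end{equation*}
so $\|A-A^*\|^2 = \max(\|BB^*\|,\|B^*B\|)=\|B\|^2 = \|A\|^2$. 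This gives $\|[P,H]\| = \|PHP^\perp\|$, and the identity $\|PHP^\perp\| = \|P^\perp HP\|$ follows by taking adjoints.

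Finally, I would verify that $\|P(t)HP(t)^\perp\|$ is independent of $t$. Since $U(t)=\mathrm{e}^{-\mathrm{i} Ht}$ commutes with $H$ and $P(t)^\perp = U(t)P_0^\perp U(t)^*$, one obtains $P(t)HP(t)^\perp = U(t)P_0 H P_0^\perp U(t)^*$, whose norm equals $\|P_0 H P_0^\perp\| = V_{H,P_0}$ by unitarity. Hence $\|\dot P(t)\| \equiv V_{H,P_0}$, and inserting this constant into the inequality \eqref{tPP} yields \eqref{MBound} immediately. The only nontrivial obstacle is the block-matrix norm computation above; everything else is routine manipulation with the explicit form of $P(t)$.
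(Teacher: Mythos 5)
Your proposal is correct and follows essentially the same route as the paper: reduce to the integral inequality \eqref{tPP}, identify $\dot P(t)$ with the commutator via \eqref{Sch1P}, use the off-diagonal decomposition $[P,H]=PHP^\perp-P^\perp HP$ and the unitary covariance $P(t)=U(t)P_0U(t)^*$ to get the constant bound $V_{H,P_0}$ on $\|\dot P(t)\|$. The only difference is that you spell out, via the $2\times 2$ block computation of $(A-A^*)^*(A-A^*)$, the norm identity $\|[P_0,H]\|=\|P_0HP_0^\perp\|$ that the paper declares to follow immediately from the block off-diagonal form; that computation is correct.
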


\begin{proof}
In the case under consideration the operator norm of the commutator
of $P(t)$ and $H$ does not depend on $t$, thus,
\begin{equation}
\label{not} \bigl\|[P(t),H]\bigr\|=\bigl\|[P_0,H]\bigr\|, \quad
t\geq 0.
\end{equation}
Furthermore, the commutator $[P_0,H]$ is block off-diagonal with
respect to the orthogonal decomposition
$\mathfrak{H}=\mathfrak{P}_0\oplus\mathfrak{P}_0^\perp$, more
precisely,
\begin{equation}
\label{PHHH} [P_0,H]=P_0HP_0^\perp-P_0^\perp HP_0.
\end{equation}
From \eqref{PHHH} it immediately follows that
$\bigl\|[P_0,H]\bigr\|=\|P_0HP_0^\perp\|=\|P_0^\perp HP_0\|$. Thus,
in order to conclude with \eqref{MBound} it only remains to combine
\eqref{Sch1P} with \eqref{not} and \eqref{PHHH} and then to apply to
$P(t)$ the inequality \eqref{tPP}.

The proof is complete.
\end{proof}

It is worth to notice that by \eqref{MBound}, \eqref{VHP} only the
off-diagonal entries  $P_0HP_0^\perp$ and $P_0^\perp HP_0$ of $H$
contribute into the variation of the subspace $\mathfrak{P}_0$. If
the Hamiltonian $H$ is block diagonal with respect to the
decomposition
$\mathfrak{H}=\mathfrak{P}_0\oplus\mathfrak{P}_0^\perp$ and, thus,
the subspace $\mathfrak{P}_0$ is reducing for $H$ it does not vary
in time at all. This concerns, in particular the case where
$\mathfrak{P}_0$ is a spectral subspace of $H$.

\begin{corollary}
\label{Cor1} Under the hypothesis of Theorem \ref{Th2}, assume that
$T_\theta$ is a time moment for which the maximal angle between the
initial subspace $\mathfrak{P}_0$ and a subspace in the path
$\mathfrak{P}(t)$, $t\geq 0$, reaches the value of $\theta$,
$0<\theta\leq \frac{\pi}{2}$, that is,
\begin{equation}
\label{TtP}
\vartheta\bigl(\mathfrak{P}_0,\mathfrak{P}(T_\theta))=\theta.
\end{equation}
Then
\begin{equation}
\label{TtPb} T_\theta\geq \frac{\theta}{V_{{H,P_0}}}\,.
\end{equation}
\end{corollary}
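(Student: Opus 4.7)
The plan is to obtain Corollary~\ref{Cor1} as an essentially immediate consequence of Theorem~\ref{Th2}, since the statement only rearranges the linear-in-$t$ bound \eqref{MBound} into a lower bound on the time at which a prescribed value of the maximal angle is attained.

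First I would substitute $t = T_\theta$ directly into the inequality \eqref{MBound}, which reads $\vartheta\bigl(\mathfrak{P}_0,\mathfrak{P}(t)\bigr)\leq V_{H,P_0}\, t$. By the defining assumption \eqref{TtP}, the left-hand side at $t=T_\theta$ equals exactly $\theta$, so one gets $\theta \leq V_{H,P_0}\, T_\theta$. Dividing both sides by $V_{H,P_0}$ then yields \eqref{TtPb}.

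The only subtle point to address is the case $V_{H,P_0}=0$. If this quantity vanishes, then by \eqref{VHP} and \eqref{PHHH} the commutator $[P_0,H]$ is zero, so by \eqref{Sch1P} the projection path is constant, $P(t)\equiv P_0$, and consequently $\vartheta\bigl(\mathfrak{P}_0,\mathfrak{P}(t)\bigr)=0$ for all $t\geq 0$. In that situation no positive value $\theta\in(0,\pi/2]$ is ever attained, so the hypothesis \eqref{TtP} is vacuous and there is nothing to prove. Whenever $V_{H,P_0}>0$ the division step is legitimate and \eqref{TtPb} follows.

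The proof has no real obstacle: Theorem~\ref{Th2} is already doing all of the work, and the corollary is the standard reformulation of a Lipschitz-type upper bound on a monotone-like quantity into a lower bound on the hitting time of a prescribed level. The only thing worth being explicit about in the write-up is the trivial dichotomy between $V_{H,P_0}=0$ and $V_{H,P_0}>0$ discussed above.
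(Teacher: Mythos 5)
Your proof is correct and is exactly the argument the paper intends: the corollary is stated as an immediate consequence of Theorem~\ref{Th2}, obtained by evaluating \eqref{MBound} at $t=T_\theta$ and dividing by $V_{H,P_0}$. Your explicit treatment of the degenerate case $V_{H,P_0}=0$ (where the hypothesis \eqref{TtP} is vacuous) is a sensible addition but does not change the substance.
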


\begin{example}
\label{exmpl} Let the Hamiltonian $H$ correspond to a two-level
system with non-degenerate bound states $e_1$ and $e_2$, that is,
$\|e_1\|=\|e_2\|=1$, ${\langle} e_1,e_2{\rangle}=0$, and
$$
H=E_1 {\langle} \cdot,e_1{\rangle} e_1+ E_2 {\langle}
\cdot,e_2{\rangle} e_2
$$
where the binding energies $E_1$ and $E_2$ are different, $E_1\neq
E_2$. Assume that $P_0$, $P_0={\langle} \cdot,e{\rangle} e$,  is
projection on the one-dimensional subspace spanned by the vector
$e=\frac{1}{\sqrt{2}}(e_1+e_2)$.
\end{example}

Notice that Example \ref{exmpl} is employed in many papers on
quantum speed limits (see, e.g., \cite{DeCa,Frey,BBr,China}). In
particular, this example proves tightness of both the
Mandelstam-Tamm and Margolus-Levitin inequalities (see, e.g.,
\cite[Section 2.4]{DeCa}). One easily verifies that this example
also works well for the bound \eqref{TtPb} turning this bound into
equality. That is, the bound \eqref{TtPb} is optimal.

\begin{theorem}
\label{TTtheta} Assume the hypothesis of Theorem \ref{Th2}. Let
$\theta$ and $T_\theta$ be the same as in Corollary \ref{Cor1}. Then
the following inequality holds:
\begin{equation}
\label{TtPb1} T_\theta\geq \frac{\theta}{\Delta
E_{\mathfrak{P}_0}}\,,
\end{equation}
where
\begin{equation}
\label{Delta} \Delta
E_{\mathfrak{P}_0}:=\sup\limits_{\psi\in\mathfrak{P}_0,
\,\|\psi\|=1} \bigl({\langle} H^2\psi,\psi{\rangle}-{\langle}
H\psi,\psi{\rangle}^2\bigr)^{1/2}
\end{equation}
\end{theorem}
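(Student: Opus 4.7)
The plan is to deduce Theorem~\ref{TTtheta} from Corollary~\ref{Cor1} by showing the operator-norm quantity $V_{H,P_0}$ is dominated by the subspace energy spread $\Delta E_{\mathfrak{P}_0}$. Once this single inequality $V_{H,P_0}\le \Delta E_{\mathfrak{P}_0}$ is established, Corollary~\ref{Cor1} immediately yields
$$
T_\theta \ge \frac{\theta}{V_{H,P_0}} \ge \frac{\theta}{\Delta E_{\mathfrak{P}_0}},
$$
which is \eqref{TtPb1}. So the whole proof reduces to one auxiliary estimate.

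First, I would rewrite $V_{H,P_0}=\|P_0^\perp H P_0\|$ as a supremum taken only over unit vectors in $\mathfrak{P}_0$. Since $P_0^\perp H P_0$ vanishes on $\mathfrak{P}_0^\perp$, decomposing an arbitrary unit vector $\phi\in\mathfrak{H}$ as $\phi=P_0\phi+P_0^\perp\phi$ and noting $\|P_0\phi\|\le 1$ gives
$$
V_{H,P_0}=\sup_{\psi\in\mathfrak{P}_0,\,\|\psi\|=1}\|P_0^\perp H\psi\|.
$$

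Second, I would show that for each unit $\psi\in\mathfrak{P}_0$ one has $\|P_0^\perp H\psi\|\le \bigl(\langle H^2\psi,\psi\rangle-\langle H\psi,\psi\rangle^2\bigr)^{1/2}$. The key observation is that if $\psi\in\mathfrak{P}_0$ then $\langle H\psi,\psi\rangle=\langle H\psi,P_0\psi\rangle=\langle P_0H\psi,\psi\rangle$, so by Cauchy--Schwarz $\langle H\psi,\psi\rangle^2\le \|P_0H\psi\|^2$. Combining this with the Pythagorean identity $\|H\psi\|^2=\|P_0H\psi\|^2+\|P_0^\perp H\psi\|^2$ yields
$$
\|P_0^\perp H\psi\|^2=\|H\psi\|^2-\|P_0H\psi\|^2\le \|H\psi\|^2-\langle H\psi,\psi\rangle^2,
$$
which is exactly $\Delta E(\psi)^2$. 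Taking the supremum over unit $\psi\in\mathfrak{P}_0$ on both sides gives $V_{H,P_0}\le\Delta E_{\mathfrak{P}_0}$.

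I do not expect a real obstacle here: the argument is essentially a one-line variance computation combined with the off-diagonal structure of $[P_0,H]$ already exploited in Theorem~\ref{Th2}. The only conceptual point one must notice is that $\langle H\psi,\psi\rangle$ for $\psi\in\mathfrak{P}_0$ feels only the $P_0$-block of $H\psi$, which is precisely what makes the off-diagonal norm bounded by the on-state dispersion. Consistency with the known one-dimensional Fleming bound is a useful sanity check: when $\mathfrak{P}_0=\mathbb{C}\psi_0$, a direct calculation gives $V_{H,P_0}=\|P_0^\perp H\psi_0\|=\Delta E$, so both inequalities coincide and \eqref{TtPb1} specializes correctly to \eqref{Flem}.
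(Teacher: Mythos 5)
Your proposal is correct and follows exactly the route the paper indicates: the paper itself skips the proof, remarking only that \eqref{TtPb1} follows from Theorem~\ref{Th2} together with the inequality $V_{H,P_0}\leq \Delta E_{\mathfrak{P}_0}$, which is precisely what you establish. Your variance computation via the Pythagorean decomposition $\|H\psi\|^2=\|P_0H\psi\|^2+\|P_0^\perp H\psi\|^2$ correctly supplies the details the authors omit.
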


Skipping the proof, we only notice that \eqref{TtPb1} is proven by
Theorem \ref{Th2} by taking into account that $V_{{H,P_0}}\leq
\Delta E_{\mathfrak{P}_0}$.

\begin{remark}
Example \ref{exmpl} shows that the bound \eqref{TtPb1} is sharp. (In
the case of a one-dimensional subspace $\mathfrak{P}_0$ this bound
simply turns into the Fleming bound for the speed of a state
evolution).
\end{remark}

The next statement is easy to prove and it is rather well known. We
present it here only for convenience of the reader.

\begin{lemma}
\label{LEmEm} For the maximal energy dispersion $\Delta
E_{\mathfrak{P}_0}$ on the subspace $\mathfrak{P}_0$ defined by
\eqref{Delta}, one always has the (optimal) bound
\begin{equation}
\label{Ebnd} \Delta E_{\mathfrak{P}_0}\leq \frac{E_{\rm
max}(H)-E_{\rm min}(H)}{2},
\end{equation}
where
$$
E_{\rm min}(H)=\min\bigl(\mathop{\mathrm{spec}}(H)\bigr)\quad\text{
and }\quad E_{\rm max}(H)=\max\bigl(\mathop{\mathrm{spec}}(H)\bigr)
$$
are respectively the upper and lower bounds of the spectrum of the
Hermitian Hamiltonian~$H$.
\end{lemma}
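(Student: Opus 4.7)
The plan is to establish the inequality \eqref{Ebnd} by a pointwise (single-vector) variance estimate and then pass to the supremum over unit vectors in $\mathfrak{P}_0$. The single-vector bound in turn reduces to a well-known fact: for a bounded self-adjoint operator, the variance in any state is at most one quarter of the squared spectral width.

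First, for a fixed unit vector $\psi\in\mathfrak{P}_0$, I would use the elementary identity
\[
\langle (H-c)^2\psi,\psi\rangle \;=\; \bigl(\langle H^2\psi,\psi\rangle - \langle H\psi,\psi\rangle^2\bigr) \,+\, \bigl(c-\langle H\psi,\psi\rangle\bigr)^2,
\]
valid for every $c\in\mathbb{R}$. This shows that the variance equals $\min_{c\in\mathbb{R}}\|(H-c)\psi\|^2$; in particular, the variance is bounded above by $\|H-c\|^2$ for every $c$.

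Second, I would choose $c=\tfrac{1}{2}\bigl(E_{\max}(H)+E_{\min}(H)\bigr)$. The spectral theorem applied to the bounded self-adjoint operator $H$ gives
\[
\|H-c\| \;=\; \sup\{|\lambda-c|:\lambda\in\mathrm{spec}(H)\} \;=\; \tfrac{1}{2}\bigl(E_{\max}(H)-E_{\min}(H)\bigr),
\]
which happens to be the smallest possible value of $\|H-c\|$ over $c\in\mathbb{R}$. Combining this identity with the pointwise bound from the previous step and taking the supremum over $\psi\in\mathfrak{P}_0$ with $\|\psi\|=1$ produces \eqref{Ebnd}.

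For the optimality claim, it is enough to exhibit unit vectors saturating the bound in the case $\mathfrak{P}_0=\mathfrak{H}$. If $E_{\min}$ and $E_{\max}$ are eigenvalues with normalized eigenvectors $v_{\min}$ and $v_{\max}$, the balanced superposition $\psi=\tfrac{1}{\sqrt{2}}(v_{\min}+v_{\max})$ gives, by direct computation, $\langle H\psi,\psi\rangle=(E_{\min}+E_{\max})/2$ and $\langle H^2\psi,\psi\rangle=(E_{\min}^2+E_{\max}^2)/2$, so the variance is exactly $\bigl((E_{\max}-E_{\min})/2\bigr)^2$. When the spectral extremes are not attained as eigenvalues, the spectral theorem provides Weyl-type approximate eigenvectors, and the same two-mode construction produces $\Delta E_{\mathfrak{H}}$ arbitrarily close to the bound. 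The only mildly delicate point in the whole argument is this approximation step; the main inequality itself is essentially a one-line consequence of spectral calculus and is not expected to present any real obstacle.
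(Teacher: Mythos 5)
Your argument is correct: the identity $\|(H-c)\psi\|^2=\bigl(\langle H^2\psi,\psi\rangle-\langle H\psi,\psi\rangle^2\bigr)+\bigl(c-\langle H\psi,\psi\rangle\bigr)^2$, the choice of $c$ as the midpoint of the spectrum, and the norm computation $\|H-c\|=\tfrac12\bigl(E_{\rm max}(H)-E_{\rm min}(H)\bigr)$ together give \eqref{Ebnd}, and the balanced two-mode superposition (with Weyl approximate eigenvectors when the spectral edges are not eigenvalues) settles optimality. There is nothing to compare against: the paper deliberately omits the proof of Lemma~\ref{LEmEm}, calling it ``easy to prove and rather well known,'' and your argument is precisely the standard one the authors presumably have in mind.
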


With Lemma \ref{LEmEm} one easily obtains the following corollary to
Theorem \ref{TTtheta}.

\begin{corollary}
\label{CorFin} Assume that $\Omega$ is a non-negative number and let
$\mathcal{B}_{\Omega}(\mathfrak{H})$ be the set of all bounded
self-adjoint operators $H$ in the Hilbert space $\mathfrak{H}$
$($with $\mathop{\rm dim}\mathfrak{H}\geq 2$$)$ such that
$$
E_{\rm max}(H)-E_{\rm min}(H)=\Omega.
$$
Then
\begin{equation}
\label{TtOm} \inf_{H\in\mathcal{B}_\Omega(\mathfrak{H})} T_\theta(H)
= \frac{2\theta}{\Omega},
\end{equation}
where $T_\theta(H)$ is a time moment for which the maximal angle
between the initial subspace $\mathfrak{P}_0$ and a subspace in the
path $\mathfrak{P}(t)$, $t\geq 0$, given by \eqref{PT} reaches the
value of $\theta\leq\frac{\pi}{2}$.
\end{corollary}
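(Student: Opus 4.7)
The plan is to establish \eqref{TtOm} by proving two matching bounds: first, that $T_\theta(H) \geq 2\theta/\Omega$ for every admissible $H \in \mathcal{B}_\Omega(\mathfrak{H})$ and every admissible initial subspace $\mathfrak{P}_0$; second, the explicit construction of a particular pair $(H_0, P_0)$ for which $T_\theta(H_0) = 2\theta/\Omega$. Taken together, these will show that the infimum in \eqref{TtOm} equals $2\theta/\Omega$ and is in fact attained.

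For the lower bound, I would simply chain two results already in the paper. Theorem \ref{TTtheta} yields
$$
T_\theta(H) \;\geq\; \frac{\theta}{\Delta E_{\mathfrak{P}_0}},
$$
and Lemma \ref{LEmEm} gives $\Delta E_{\mathfrak{P}_0} \leq (E_{\max}(H)-E_{\min}(H))/2 = \Omega/2$. Composing these two inequalities yields $T_\theta(H) \geq 2\theta/\Omega$ uniformly in $H \in \mathcal{B}_\Omega(\mathfrak{H})$ and in $\mathfrak{P}_0$, which is precisely the $\geq$ direction of \eqref{TtOm}.

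For the matching upper bound I would adapt Example \ref{exmpl} to the present setting. Since $\dim\mathfrak{H}\geq 2$, pick orthonormal vectors $e_1,e_2\in\mathfrak{H}$ and set $E_1=0$, $E_2=\Omega$. Define $H_0$ on $\mathop{\mathrm{span}}\{e_1,e_2\}$ by $E_1\langle\cdot,e_1\rangle e_1 + E_2\langle\cdot,e_2\rangle e_2$ and extend it to the orthogonal complement by $(\Omega/2)I$; then $\mathop{\mathrm{spec}}(H_0)\subseteq[0,\Omega]$ with both endpoints attained, so $H_0\in\mathcal{B}_\Omega(\mathfrak{H})$. Take $P_0=\langle\cdot,e\rangle e$ with $e=(e_1+e_2)/\sqrt{2}$. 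A direct calculation with $\psi(t)=\mathrm{e}^{-\mathrm{i} H_0 t}e$ gives $|\langle e,\psi(t)\rangle|=|\cos(\Omega t/2)|$. Since both $\mathfrak{P}_0$ and $\mathfrak{P}(t)$ are one-dimensional, their maximal angle coincides with the usual acute angle $\Omega t/2$ on $[0,\pi/\Omega]$, so setting it equal to $\theta\in(0,\pi/2]$ gives $T_\theta(H_0)=2\theta/\Omega$.

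There is no serious obstacle; the only thing to verify carefully is that the extension of $H_0$ to the orthogonal complement of $\mathop{\mathrm{span}}\{e_1,e_2\}$ is harmless. Indeed, it does not enlarge the spectral spread (so that $H_0$ genuinely belongs to $\mathcal{B}_\Omega(\mathfrak{H})$), and it does not affect the dynamics of $P_0$ because $\mathop{\mathrm{span}}\{e_1,e_2\}$ is $H_0$-invariant and contains $e$, so $\psi(t)$ stays inside this two-dimensional block for all $t\geq 0$. With the two inclusions in hand, the infimum in \eqref{TtOm} is realized and equals $2\theta/\Omega$.
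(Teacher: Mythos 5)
Your proposal is correct and follows essentially the same route as the paper, which obtains the lower bound $T_\theta(H)\geq 2\theta/\Omega$ by chaining Theorem~\ref{TTtheta} with Lemma~\ref{LEmEm} and relies on the two-level construction of Example~\ref{exmpl} (with level splitting $E_2-E_1=\Omega$) for attainment. Your added care about extending $H_0$ by $(\Omega/2)I$ on the orthogonal complement is a sensible way to make the example live in an arbitrary $\mathfrak{H}$ with $\dim\mathfrak{H}\geq 2$.
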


The bound \eqref{TtOm} represents a generalization to subspaces of
the optimal passage time estimate established for the quantum
brachistochrone problem (see, e.g. \cite[equation (12.17)]{BBr}).
The latter estimate is nothing but the equality in the Fleming bound
\eqref{Flem} with $\Delta E$ replaced by $\frac{1}{2}\Omega$ where
the quantity $\Omega$ is introduced in Corollary \ref{CorFin}.
\medskip

\noindent\textbf{Acknowledgments.} Support of this work by the
Heisenberg-Landau Program is kindly acknowledged.

\bigskip

\end{document}